\documentclass[a4paper,12pt]{article}
%%%%%%%%%%%%%%%%%%%%%%%%%%%%%%%%%%%%%%%%%%%%%%%%%%%%%%%%%%%%%%%%%%%%%%%%%%%%%%%%%%%%%%%%%%%%%%%%
\usepackage{amsmath}
\usepackage{amssymb}
\usepackage{amsfonts,times}
\usepackage{theorem}
\usepackage{hyperref}
\AtBeginDocument{}
\setcounter{MaxMatrixCols}{10}
\newtheorem{theorem}{Theorem}

\newtheorem{definition}[theorem]{Definition}

\newtheorem{lemma}[theorem]{Lemma}

\newtheorem{proposition}[theorem]{Proposition}
{\theorembodyfont{\upshape}\newtheorem{remark}[theorem]{Remark}}
{\theorembodyfont{\upshape}}

\newenvironment{proof}[1][Proof]{\noindent\textbf{#1.} }{\ \hfill \rule{0.5em}{0.5em}}

\newcommand{\IN}{\mathbb{N}}
\newcommand{\IE}{\mathbb{E}}
\newcommand{\IP}{\mathbb{P}}
\newcommand{\IR}{\mathbb{R}}

\begin{document}

\title{Semicircle law for generalized Curie-Weiss matrix ensembles at subcritical temperature}
\author{Werner Kirsch \\
%EndAName
Fakult\"{a}t f\"{u}r Mathematik und Informatik\\
FernUniversit\"{a}t in Hagen, Germany \and Thomas Kriecherbauer \\
%EndAName
Mathematisches Institut\\
Universit\"{a}t Bayreuth, Germany}
\date{}

\maketitle

\abstract{In \cite{HKW} Hochst\"attler, Kirsch, and Warzel showed that the semicircle law holds for generalized Curie-Weiss matrix ensembles
at or above the critical temperature. We extend their result to the case of subcritical temperatures for which the
correlations between the matrix entries are stronger. Nevertheless, one may use the concept of {\em approximately
uncorrelated} ensembles that was first introduced in \cite{HKW}. In order to do so one needs to remove the
average magnetization of the entries by an appropriate modification of the ensemble that turns out to be of rank $1$
thus not changing the limiting spectral measure.
}

\section{Introduction}

Hochst\"attler, Kirsch, and Warzel proved in \cite{HKW} the semicircle law for ensembles of real symmetric matrices
where the upper triangular part is filled by what they called
{\em approximately uncorrelated} random variables \cite[Def. 4]{HKW}  (see also Definition \ref{def:approx} below).
An important motivation for introducing this notion is that collections of random variables with values in $\{-1,1\}$
that are distributed according to the Curie-Weiss law at or above the critical temperature are approximately uncorrelated and thus
the semicircle law holds for the corresponding matrix ensembles. It is the main goal of the present paper to show how one
may use the concept of approximately uncorrelated random variables to prove a semicircle law also for subcritical temperatures.

In order to state our result precisely we need a few definitions.
Curie-Weiss random variables $\xi_1, \ldots , \xi_M$, also called spins, take values in $\{-1,1\}$ with probability
\begin{equation*}
\IP_{\beta}^M (\xi_1 = x_1, \ldots, \xi_M = x_M) = Z_{\beta,M}^{-1} e^{\frac{\beta}{2M} (\sum_{i=1}^M x_i)^2}
\end{equation*}
where $Z_{\beta,M}$ denotes the normalization constant. By $\IE_{\beta}^{M}$ we denote the expectation with respect to $\IP_{\beta}^{M}$.
%\neu

The parameter $\beta \geq 0$ is interpreted in physics as
{\em inverse temperature}, $\beta = \frac{1}{T}$.
Information on the physical meaning of the Curie-Weiss model can be found
in \cite{Ellis, Thompson}. If $\beta = 0$ the random variables $\xi_i$ are independent while
for $\beta > 0$ there is a positive correlation between the $\xi_i$ that grows with $\beta$.
At the {\em critical inverse temperature} $\beta=1$ a phase transition occurs.
While at and above the critical temperature $(\beta \leq 1)$ the average spin $\frac{1}{M} \sum_{i=1}^M \xi_i$
converges in distribution to the Dirac measure $\delta_0$, the average spin converges to
$\frac{1}{2} (\delta_{-m (\beta)} + \delta_{m (\beta)})$ below the critical temperature
$(\beta >1)$, where $m = m (\beta) \in (0,1)$ is called the {\em average magnetization}.
It can be defined as the (unique) strictly positive solution of
\begin{equation}\label{eq:defm}
\tanh (\beta m) = m\,.
\end{equation}
For a proof of this fact see e.\,g. \cite{Ellis} or \cite{Kirsch}. The following observation is fundamental for the analysis of
\cite{HKW} and also for the present paper: Curie-Weiss distributed random variables are of {\em de Finetti type}, i.e. they can be
represented as an average of independently distributed random variables. More precisely, for $t \in [-1,1]$ we denote
the probability measures
\begin{eqnarray}\nonumber
P_t^{(M)} &=& \textstyle \bigotimes_{j=1}^M P_t^{(1)}  \text{~  as the $M$-fold product of $P_t^{(1)}$ with}\\
P_t^{(1)} (\{1\}) &=& \textstyle \frac{1}{2} (1+t) \text{~  and ~} P_t^{(1)} (\{-1\}) \, = \, \frac{1}{2} (1-t)\,.\label{def:Pt}
\end{eqnarray}
If $M$ is clear from the context we write $P_t$ instead of $P_t^{(M)}$. By $E_t$ resp. $E_t^{(M)}$
we denote the corresponding expectation. For any function $\phi$ on $\{-1,1\}^M$ we have
\begin{eqnarray}\label{eq:dFCW}\nonumber
&&\IE_{\beta}^M \Big(\phi (X_1, \ldots, X_M) \Big) = \frac{1}{Z}\int_{-1}^1 E_t \Big( \phi (X_1, \ldots , X_M) \Big)
\frac{e^{-M F_{\beta} (t)/2}}{1-t^2} dt \,, \\ \label{def:F}
&&\text{with ~~~~} F_{\beta} (t) := \frac{1}{\beta} \Big(\frac{1}{2} \ln \frac{1+t}{1-t}\Big)^2 + \ln (1-t^2)\,, \quad t \in (-1,1)\,,
\vspace{-0.3cm}
\end{eqnarray}
and normalization constant $Z:=\int_{-1}^1e^{-M F_{\beta} (t)/2}\frac{dt}{1-t^2}$.

In this sense $\IP_{\beta}^M$ is
a weighted $t$ average over all $P_t^{(M)}$.
This fact can be proved using the so called Hubbard-Stratonovich transformation (see \cite{HKW} or \cite{Kirsch}
and references therein). The above considerations motivate the following definition of
{\em generalized Curie-Weiss ensembles} introduced in \cite[Def. 29]{HKW}.
\begin{definition}\label{def:GCW}
Suppose $\alpha>0$, $\beta \geq 0$ and let $F_{\beta}$, $ P_t^{(M)}$ be defined as in \eqref{def:F}, \eqref{def:Pt} above.
We define a probability measure on $\mathbb{R}^{N^2}$ by
\begin{equation*}
\IP_{N^2}^{N^{\alpha} F_{\beta}} = \int_{-1}^1 P_t^{(N^2)} ~  d \nu_N (t) ~  , ~~  d \nu_N (t) =
\frac{1}{Z_N} \cdot \frac{e^{-N^{\alpha} F_{\beta} (t)/2}}{1 - t^2} dt\,,
\end{equation*}
where $Z_N$ is chosen such that the {\em de Finetti measure} $\nu_N$ becomes a probability measure
(see Proposition \ref{prop:F} c) for normalizability of $\nu_N$).
The corresponding matrix ensemble $(X_N)_N$ is then defined as follows. Pick $\frac{N(N+1)}{2}$ different components of the
$\IP_{N^2}^{N^{\alpha} F_{\beta}}$- distributed random vector to fill the upper triangular part of $X_N \in \mathbb{R}^{N \times N}$.
The remaining entries $X_N(i,j)$, $1 \leq j < i \leq N$ are then determined by the symmetry $X_N(i,j)=X_N(j,i)$.
Observe that $\IP_{N^2}^{N^{\alpha} F_{\beta}}$ is invariant under permutations of components.
Thus the resulting matrix ensemble does neither depend on the choice of the $\frac{N(N+1)}{2}$
components (out of $N^2$) nor on the
order in which the upper triangular part of $X_N$ is filled.
The so generated random matrix ensemble $(X_{N})_N$ is called a generalized
 $\mathbb{P}_{N^{2}}^{N^{\alpha } F_{\beta}}$- Curie-Weiss ensemble.
\end{definition}

\begin{remark}
Note that it is only in the case $\alpha = 2$ that the $N^2$ real random variables from which the entries of the matrix $X_N$
are chosen are indeed Curie-Weiss distributed. The generalization
thus consists of allowing for $\alpha \neq 2$ which still remains in the framework of exchangeable random variables.
However, for $\alpha > 2$ one may view the ensemble to be generated by selecting the matrix entries from a
collection of $N^{\alpha}$ Curie-Weiss distributed random spins $\pm 1$.

Observe that the definition in \cite{HKW} is even more general than
%\neu\  
Definition \ref{def:GCW}. It also allows
to replace $F_{\beta}$ by more general functions $F$ (cf. Remark \ref{remark:generalF}).
\end{remark}

Let $\IP$ be the probability measure underlying some matrix ensemble $(A_N)_N$ of real symmetric matrices
$A_N \in \mathbb{R}^{N\times N}$.
Denote by $\lambda_1 \leq \ldots \leq \lambda_N$ the eigenvalues of $A_N$.
Then we call $\sigma_N := \frac{1}{N} \sum_{i=1}^N \delta_{\lambda_i}$ the eigenvalue distribution measure of $A_N$.
We say that the {\em semicircle law holds} for the ensemble $(A_N)_N$, if $\sigma_N$ converges weakly in probability
to $\sigma_{sc}$ as $N \to \infty$, i.e. for all bounded continuous functions $f: \IR \rightarrow \IR$
and for all $\epsilon > 0$ we have
\begin{equation*}
\underset{N \rightarrow \infty}{\lim} \IP \bigg{(} \Big{|}
\int_{-\infty}^{\infty} f(x) ~  d \sigma_N (x) - \int_{-\infty}^{\infty} f(x) ~  d \sigma_{sc} (x)
\Big{|} > \epsilon \bigg{)} = 0 \,,
\end{equation*}
where the semicircle
$\sigma_{sc}$ is the Borel measure on $\mathbb{R}$ with support $[-2,2]$ and (Lebesgue-) density  $\frac{1}{2\pi}
\sqrt{4-x^2}$ at $x \in [-2,2]$.

The following result is proved in \cite{HKW}.
\begin{theorem}[Theorem 31 and Corollary 28 in \protect\cite{HKW}]\label{thm:HKW}
Let $(X_{N})_N$ be a generalized $\mathbb{P}_{N^{2}}^{N^{\alpha } F_{\beta}}$- Curie-Weiss ensemble. Then the semicircle law holds
for $X_N/\sqrt{N}$ if \vspace{0.1cm} either

(i) \quad $\beta \in [0,1)\,$ and $\,\alpha\geq 1$ \qquad or \qquad  (ii) \quad $\beta =1\,$ and $\,\alpha\geq 2$ \\
is satisfied.
\end{theorem}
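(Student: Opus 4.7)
The plan is to deduce Theorem~\ref{thm:HKW} from a general semicircle-law criterion, also proved in \cite{HKW}, for matrix ensembles whose entries are \emph{approximately uncorrelated} in the sense of Definition~\ref{def:approx}. Once that abstract criterion is in force the semicircle law follows by the standard moment/trace method, so the task reduces to verifying its defining joint-moment estimates for the generalized Curie-Weiss ensemble in each of the regimes (i) and (ii).

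The de Finetti representation is the decisive tool for this verification. Under $P_t^{(N^2)}$ the entries of $X_N$ are i.i.d.\ Bernoulli random variables with mean $t$ and variance $1-t^2$, and $X_N(i,j)^2=1$. Hence any joint moment has the explicit form
\[
\IE\bigl[X_N(i_1,j_1)\cdots X_N(i_r,j_r)\bigr] = m_s(N), \qquad m_s(N) := \int_{-1}^1 t^{s}\,d\nu_N(t),
\]
where $s$ is the number of matrix positions appearing an odd number of times in the product. The conditions of \cite{HKW} essentially require that, after re-scaling by $1/\sqrt{N}$, all such mixed moments asymptotically mimic those of a Wigner ensemble with variance $1$; since the symmetric $t=0$ case is already of this form, the problem collapses to showing that the moments $m_s(N)$ with $s\geq 2$ decay fast enough in $N$ (the odd $s$ vanishing because $F_\beta$, and hence $\nu_N$, is even in $t$).

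The analytic heart is therefore a Laplace-type analysis of $\nu_N$ governed by $F_\beta$ near its minimum at $t=0$. For $\beta\in[0,1)$ the function $F_\beta$ has a non-degenerate quadratic minimum at $0$, giving Gaussian concentration of width $N^{-\alpha/2}$ and hence $m_{2k}(N)=O(N^{-\alpha k})$; the choice $\alpha\geq 1$ then yields the decay $O(N^{-k})$ required by the criterion. At $\beta=1$ the Taylor expansion of $F_1$ at $0$ starts only at order $t^4$, so the concentration scale degenerates to $N^{-\alpha/4}$ and one obtains $m_{2k}(N)=O(N^{-\alpha k/2})$; this matches the subcritical rate precisely when $\alpha\geq 2$, which is exactly the dichotomy in the statement.

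The main obstacle is carrying this Laplace analysis out uniformly on all of $(-1,1)$, not only near the minimum: one has to combine the local expansion at $t=0$ with tail bounds that control both the logarithmic singularities of $F_\beta$ at $t=\pm 1$ and the $N$-dependent normalisation $Z_N$ (cf.\ Proposition~\ref{prop:F}(c)), and at $\beta=1$ the quartic rather than quadratic leading term forces more delicate estimates. Once the moment bounds on $m_s(N)$ are in place, approximate uncorrelatedness follows directly from the de Finetti identity, and the semicircle law then follows mechanically from the HKW criterion.
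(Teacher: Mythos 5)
Your proposal matches the paper's own (one-line) account of this quoted result: Theorem~\ref{thm:HKW} is imported from \cite{HKW}, and the present paper only records that it follows from the abstract criterion \cite[Theorem 5]{HKW} by verifying the approximate-uncorrelation conditions of Definition~\ref{def:approx} via the de Finetti representation and a Laplace analysis of $\nu_N$. Your key observation — that $F_\beta$ has a nondegenerate quadratic minimum at $t=0$ for $\beta<1$ but only a quartic one at $\beta=1$, yielding concentration scales $N^{-\alpha/2}$ and $N^{-\alpha/4}$ and hence the thresholds $\alpha\geq 1$ and $\alpha\geq 2$ — is precisely the mechanism behind that verification, so the sketch is essentially correct and follows the same route.
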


The purpose of the present paper is to extend this result to subcritical temperatures $\beta > 1$. Our main result reads:
\begin{theorem}
\label{theo:A1}

Let $(X_N)_N$ be a generalized $\IP_{N^2}^{N^{\alpha} F_{\beta}}$- Curie-Weiss ensemble with $\alpha \geq 1$ and $\beta > 1$.
Denote by $m = m (\beta)$ the unique strictly positive solution of the equation $\tanh (\beta m) = m$ $($cf. \eqref{eq:defm}$)$.
Then the semicircle law holds for $A_N := X_N/\sqrt{N (1 - m^2)}$, i.e.
for all bounded continuous functions $f: \IR \rightarrow \IR$ and for all $\epsilon > 0$ we have
\begin{equation*}
\underset{N \rightarrow \infty}{\lim} \IP_{N^2}^{N^{\alpha} F_{\beta}} \bigg{(}
\Big{|} \int_{-\infty}^{\infty} f(x) ~  d \sigma_N (x) - \int_{-\infty}^{\infty} f(x) ~  d \sigma_{sc} (x) \Big{|} > \epsilon \bigg{)} = 0 \,,
\end{equation*}
where $\sigma_N$ denotes the eigenvalue distribution measure of $A_N$ .
\end{theorem}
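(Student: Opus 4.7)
The plan is to reduce the subcritical case to the HKW framework by centring the matrix entries via a rank-one modification. Since $F_\beta$ is even, I decompose $\IP_{N^2}^{N^\alpha F_\beta} = \tfrac12(\IP^+ + \IP^-)$ where $\IP^\pm$ is the conditional measure on $\{\pm t > 0\}$. Because $X_N\mapsto -X_N$ maps the empirical spectral measure to its pushforward under $x\mapsto -x$ and $\sigma_{sc}$ is symmetric about $0$, it suffices to prove the theorem under $\IP^+$ alone. Under $\IP^+$ the de Finetti measure $\nu_N^+$ has density proportional to $e^{-N^\alpha F_\beta(t)/2}/(1-t^2)$ on $(0,1)$, and $F_\beta$ has a unique strict minimum there at $t=m$ with $F_\beta''(m)>0$. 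A Laplace/saddle-point analysis then yields, for every $k\in\IN$,
\[
\int_0^1 (t-m)^{2k}\, d\nu_N^+(t) \;=\; O\!\left(N^{-\alpha k}\right),
\]
together with exponentially small tails outside any fixed neighbourhood of $m$.

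Next I remove the magnetization by a rank-one modification. Let $J_N\in\IR^{N\times N}$ be the all-ones matrix and set $\tilde X_N := X_N - mJ_N$, so $X_N = \tilde X_N + mJ_N$ is a rank-one perturbation of $\tilde X_N$. Writing $\tilde\xi_{ij}:=X_N(i,j)-m$ and using that conditionally on $t$ the entries are i.i.d.\ with mean $t$ and variance $1-t^2$, for pairwise distinct indices
\[
\IE^+\big[\tilde\xi_{i_1}\cdots\tilde\xi_{i_\ell}\big] \;=\; \int_0^1 (t-m)^\ell\, d\nu_N^+(t),
\]
and more general mixed moments $\IE^+[\tilde\xi_{i_1}^{2s_1}\cdots\tilde\xi_{i_k}^{2s_k}\tilde\xi_{j_1}\cdots\tilde\xi_{j_\ell}]$ factor through the integral of $(t-m)^\ell$ against a bounded function times $\nu_N^+$. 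The Laplace estimate of the first paragraph bounds each such integral by $C_{k,\ell}\,N^{-\alpha \ell/2}$, which for $\alpha\geq 1$ is exactly the $N^{-\ell/2}$ decay demanded by the HKW definition of an approximately uncorrelated ensemble (\cite[Def.~4]{HKW}, cf.\ Definition \ref{def:approx} below). Hence $\tilde X_N/\sqrt{1-m^2}$ is approximately uncorrelated in the HKW sense, with limiting second moment $1$.

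Invoking the HKW semicircle theorem for approximately uncorrelated ensembles (the general statement underlying Theorem \ref{thm:HKW}) then gives convergence of the eigenvalue distribution of $\tilde X_N/\sqrt{N(1-m^2)}$ to $\sigma_{sc}$ in $\IP^+$-probability. Since $X_N/\sqrt{N(1-m^2)}$ differs from this matrix only by the rank-one perturbation $mJ_N/\sqrt{N(1-m^2)}$, Cauchy's interlacing theorem bounds the Kolmogorov distance between the two empirical spectral distributions by $2/N$, so the limit is unchanged; symmetrizing with the analogous argument under $\IP^-$ (now subtracting $-mJ_N$) concludes the proof. I expect the main obstacle to be the careful verification that all mixed moments of the $\tilde\xi_{ij}$ obey the HKW bounds \emph{uniformly} in the index set, especially in the borderline regime $\alpha=1$ where the $N^{-\alpha\ell/2}$ rate only barely suffices; this requires controlling both the local Gaussian regime of $\nu_N^+$ near $m$ and the exponentially small contributions from the bulk of $(0,1)\setminus\{t\approx m\}$, including the neighbourhood of the boundary points $t=0$ and $t=1$ where the density $1/(1-t^2)$ in the definition of $\nu_N^+$ is singular or must be controlled against the Gaussian decay.
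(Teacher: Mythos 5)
Your proposal is correct, and it takes a genuinely different route from the paper at one key juncture. The paper never conditions on the latent de\,Finetti variable~$t$: instead it defines indicators $\mathcal{X}_{N,\pm}=1_{\{S_N\gtrless 0\}}$ in terms of the observable sum $S_N=\sum_{i\le j}X_N(i,j)$, sets $Y_N=\frac{1}{\sqrt{1-m^2}}\big[(X_N-m\mathcal{E}_N)\mathcal{X}_{N,+}+(X_N+m\mathcal{E}_N)\mathcal{X}_{N,-}\big]$, and shows \emph{this} measurable function of $X_N$ is approximately uncorrelated with respect to the full measure $\IP_{N^2}^{N^\alpha F_\beta}$; doing so requires a Hoeffding-type large-deviation bound (Proposition~\ref{prop:A4}) to show that $\mathcal{X}_{N,\pm}$ almost surely matches the sign of $t$ when $|t|\ge m/2$. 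You instead split $\IP_{N^2}^{N^\alpha F_\beta}=\tfrac12(\IP^++\IP^-)$ by conditioning on $\{\pm t>0\}$, which makes the centring deterministic (subtract $mJ_N$ on $\IP^+$, add it on $\IP^-$) and eliminates the Hoeffding step entirely; convergence in $\IP^\pm$-probability separately yields convergence under the mixture because the failure event has probability the average of the two conditional probabilities. Both routes then rely on the same Laplace-method estimates (the content of Lemma~\ref{lemma:A5}, including controlling the endpoint singularity of $1/(1-t^2)$ via Proposition~\ref{prop:F}c)) to verify conditions \eqref{eq:A7.1}--\eqref{eq:A7.2}, on \cite[Theorem 5]{HKW}, and on the rank-one interlacing argument (Lemma~\ref{lemma:perturb}). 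Your concern about uniformity in the index set is unfounded: under $P_t$ the disjoint-index product factorizes into $(t-m)^\ell$ times a $t$-polynomial bounded by $(1+m)^p$ independently of which indices were chosen, so the bound $C_{\ell,p}N^{-\alpha\ell/2}$ is automatically uniform. The trade-off is conceptual: the paper's construction stays on the original sample space with a single random matrix $Y_N(\omega)$ that is a genuine rank-one perturbation of $X_N(\omega)/\sqrt{1-m^2}$ pointwise, while yours requires working with two conditional measures but is shorter and avoids one auxiliary estimate.
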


As mentioned above, Theorem \ref{thm:HKW} was proved in \cite{HKW} by introducing the concept of
{\em approximately uncorrelated random variables}. For ensembles of real symmetric matrices this
may be formulated as follows (cf. \cite[Def. 4]{HKW}).

\begin{definition}\label{def:approx}
Let $(\Omega, \mathcal{F}, \IP)$ denote the probability space for an ensemble $(X_N)_N$ of real symmetric
$N \times N$ matrices. We say that the entries are approximately uncorrelated if
\begin{equation}
\label{eq:A7.1}
\bigg{|}\IE \ \Big{(}\prod_{\nu = 1}^{\ell} X_N (i_{\nu}, j_{\nu})
\prod^p_{\rho = 1} X_N (u_{\rho}, v_{\rho}) \Big{)} \bigg{|} \leq \frac{C_{\ell,p}}{N^{\frac{\ell}{2}}}
\end{equation}
and for every $\ell\in\IN$ there is a sequence $a_{\ell, N}$ converging to $0$ as $N\to\infty$ with 
\begin{equation}
\label{eq:A7.2}
\Big|\IE \ \Big{(}\prod^{\ell}_{\nu = 1} X_N (i_{\nu}, j_{\nu})^2 \Big{)} - 1\Big|~\leq~a_{\ell, N}
\end{equation}
%\neu
for all sequences $(i_1, j_1), \ldots , (i_{\ell}, j_{\ell})$ which are pairwise disjoint and disjoint to the sequences
$(u_1, v_1), \ldots , (u_p, v_p)$ with $N$-independent constants $C_{\ell,p}$. By sequence we mean that all indices
$(i_s, j_s)$, $(u_s,v_s)$ may depend on $N$ and that for each value of $N$ they belong to the set $\{ (i,j) : 1\leq i \leq j \leq N \}$.
\end{definition}

In \cite[Theorem 5]{HKW} it is shown that conditions \eqref{eq:A7.1} and \eqref{eq:A7.2} are well suited to apply the original ideas
of Wigner \cite{Wigner1, Wigner2} and Grenander \cite{Grenander} to prove the semicircle law via the method of moments
(see also \cite{Arnold, Arnold71, MarchenkoP, Pastur72, Pastur} and the
monographs \cite{AGZ, Bai etal, Pastur Sherbina, Tao}). Matrix ensembles with correlated entries have already been considered in
\cite{Bryc et al, FriesenLoewe1, FriesenLoewe2, Goetze Naumov Tikhomirov, Goetze Tikhomirov 1, HofmannS, LoeweS,
Schenker Schulz-Baldes}. See \cite{HKW} and the recent survey \cite{KK1} for more information on these results.

Theorem \ref{thm:HKW} follows from \cite[Theorem 5]{HKW} by verifying that under conditions (i) and (ii) the
ensemble $\IP_{N^2}^{N^{\alpha} F_{\beta}}$ has approximately
%\neu\ 
uncorrelated entries. For $\beta > 1$ the situation is different.
For example, condition \eqref{eq:A7.1} is violated for any $\alpha > 0$, because (cf. \cite[proof of Proposition 35]{HKW})
\begin{equation}\label{eq:counter}
\lim_{N \to \infty} \IE_{N^2}^{N^{\alpha} F_{\beta}} \big{(} X_N (1, 1) X_N (1, 2) \big{)} = m^2 > 0\,,
\end{equation}
where, again, $m$ denotes the average magnetization \eqref{eq:defm}.
Indeed, by the independence of $X_N (1, 1)$ and $X_N (1, 2)$ with
respect to $P_t$ we conclude $E_t (X_N (1, 1) X_N (1, 2)) =E_t (X_N (1, 1))E_t (X_N (1, 2))=t^2$.
Moreover, and this is the crucial difference from the
case $\beta \leq 1$, the function $F_{\beta}$ has two minimizers $\pm m$ (cf. Proposition \ref{prop:F})
so that for large values of $N$ we have
$\nu_N \sim \frac{1}{2} (\delta_{-m} + \delta_{m})$ from which \eqref{eq:counter} follows.
Proposition \ref{prop:Laplace} stated below can be used to make this argument rigorous.

In order to prove Theorem \ref{theo:A1} we use an idea that has already been introduced in the proof of Proposition 32 in
\cite{HKW}. For the moment let us proceed heuristically and assume $\nu_N = \frac{1}{2} (\delta_{-m} + \delta_m)$. Therefore
we only need to consider the matrix ensembles related to $P_{\pm m}$.
These fall into the class of real symmetric matrices with i.i.d. entries $X_N (i,j), i \leq j$
that are distributed according to $\frac{1}{2} [(1 \pm m) \delta_1 + (1 \mp m) \delta_{-1}]$.
Consequently $E_{\pm m} (X_N (i,j)) = \pm m$ and $E_{\pm m} (X_N^2 (i,j)) = 1$.
Subtracting the mean and dividing by the standard deviation leads to standard Wigner ensembles
\begin{equation}
\label{eq:A2.2}
\tilde{X}_{N,\pm} := \frac{1}{\sqrt{1 - m^2}} (X_N \mp m \mathcal{E}_N), ~~  \mathcal{E}_N (i,j) := 1
\text{ ~ for all ~} 1\leq i,j \leq N \,.
\end{equation}
By the classical results of Wigner \cite{Wigner1, Wigner2} the eigenvalue distribution measures of
$\tilde{X}_{N\pm} / \sqrt{N}$ converge to the semicircle law $\sigma_{sc}$ when considered
with respect to the probability measures $P_{\pm m}$.

Now we need to relate $\tilde{X}_{N,\pm}$ back to $X_N$. To this end, observe that $X_N/\sqrt{N (1 - m^2)}$
is a rank 1 perturbation of both $\tilde{X}_{N,\pm} / \sqrt{N}$ and we can therefore expect (see Lemma \ref{lemma:perturb})
that the eigenvalue distribution measures of $X_N/\sqrt{N (1 - m^2)}$ converge to the semicircle  law as well.
What is still missing is an indicator when to replace the $\mathbb{P}_{N^{2}}^{N^{\alpha } F_{\beta}}$- distributed $X_N$
by $\tilde{X}_{N,+}$ and when by $\tilde{X}_{N,-}$. As we will see by some basic large deviations argument in Proposition
\ref{prop:A4} the sum of the entries is an efficient choice for that.
Define
\begin{equation}
\label{eq:A3.1}
S_N := \!\!\!\! \sum_{1 \leq i \leq j \leq N} \!\!\!\! X_N (i,j), ~~
\mathcal{X}_{N,+} := 1_{\{S_N > 0 \}}, ~~  \mathcal{X}_{N,-} := 1_{\{S_N \leq 0 \}} = 1 - \mathcal{X}_{N,+} \,,
\end{equation}
\begin{equation}
\label{eq:A4.1}
Y_{N,\pm} := \frac{1}{\sqrt{1 - m^2}} (X_N \mp m \mathcal{E}_N) \mathcal{X}_{N,\pm}\,, ~~~  Y_N := Y_{N,+} + Y_{N,-} \,\,.
\end{equation}
The core of the proof of Theorem \ref{theo:A1} is to show that the entries of $Y_N$ are approximately uncorrelated. Since
$Y_N - X_N/\sqrt{1 - m^2}$ has rank $1$ our main result is then a consequence of
\cite[Theorem 5]{HKW} and Lemma \ref{lemma:perturb}.

We also have results on the largest and second largest
singular value of generalized Curie-Weiss matrices \cite{KK3}, which complement and improve
on \cite{HKW}.

The plan of the paper is as follows.
In Section \ref{sec.2} we use Laplace's method to make the starting point of our heuristic argument,
$\nu_N \sim \frac{1}{2} (\delta_{-m} + \delta_{m})$ for large values of $N$, precise.
The remaining arguments for the proof of our main result Theorem \ref{theo:A1} are gathered in Section \ref{sec.3}.

\medskip \textbf{Acknowledgement}
The second author would like to thank the Lehrgebiet Stochastics at the FernUniversit\"{a}t in Hagen,
where most of the work was accomplished, for support and great hospitality.

The authors are grateful to Michael Fleermann for valuable suggestions.
%\neu

\section{Analysis of the de Finetti measures}
\label{sec.2}

In this section we state and prove in Lemma \ref{lemma:A5} the precise version of the formula
$\nu_N \sim \frac{1}{2} (\delta_{-m} + \delta_{m})$ for the de Finetti measure
that we used in the heuristic arguments at the end of the Introduction.

\begin{lemma}
\label{lemma:A5}

Assume $\alpha > 0$, $\beta > 1$ and let $\nu_N \equiv \nu_{N, \alpha, \beta}$ be given as in Definition \ref{def:GCW}.
Recall also the definition of the average magnetization $m \equiv m (\beta)$ in \eqref{eq:defm} . Then the following holds:
\renewcommand{\labelenumi}{\alph{enumi})}
\begin{enumerate}
\item There exist numbers $C, \delta > 0$ such that for all $N \in \IN:$
\begin{equation*}
\nu_N ([-\frac{m}{2}, \frac{m}{2}]) \leq C e^{-\delta N^{\alpha}}
\end{equation*}
\item For all $\ell \in \IN$ there exist $C_{\ell} > 0$ such that for all $N \in \IN:$
\begin{equation*}
\int_{\frac{m}{2}}^1  |t-m|^{\ell} ~ d \nu_N (t) \leq C_{\ell} N^{-\frac{\alpha \ell}{2}}
\end{equation*}
\item $\displaystyle \underset{N \rightarrow \infty}{\lim} \int_{\frac{m}{2}}^1 (1 + m^2 - 2mt)^{\ell} ~ d \nu_N (t) =
\frac{1}{2} (1 - m^2)^{\ell}$ ~~  for all $\ell \in \IN$.
\end{enumerate}
\end{lemma}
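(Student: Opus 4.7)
The strategy is Laplace's method applied to the de Finetti measure $\nu_N$. The properties of $F_\beta$ I take from Proposition \ref{prop:F} are: $F_\beta$ is smooth on $(-1,1)$, even, strictly decreasing on $[0,m]$ with unique minimum on $[0,1]$ at $t=m$, $F_\beta''(m)>0$, and $F_\beta(t)\to+\infty$ as $t\to\pm 1$. Evenness of $F_\beta$ and of $(1-t^2)^{-1}$ makes $\nu_N$ symmetric under $t\mapsto-t$, so it suffices to work on $[0,1]$ and transfer conclusions to $[-1,0]$.

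For part a), on $[-m/2,m/2]$ one has $F_\beta(t)\geq F_\beta(m/2)=F_\beta(m)+2\eta$ for some $\eta>0$ by monotonicity of $F_\beta$ on $[0,m]$. Hence the numerator $\int_{-m/2}^{m/2}(1-t^2)^{-1}e^{-N^\alpha F_\beta(t)/2}\,dt$ is at most $Ce^{-N^\alpha(F_\beta(m)+2\eta)/2}$. A standard Laplace lower bound near $\pm m$, using $F_\beta''(\pm m)>0$, gives $Z_N\geq cN^{-\alpha/2}e^{-N^\alpha F_\beta(m)/2}$. Dividing yields $\nu_N([-m/2,m/2])\leq CN^{\alpha/2}e^{-\eta N^\alpha}\leq C'e^{-\delta N^\alpha}$ for any $0<\delta<\eta$.

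For part b), I localize to $[m-\varepsilon,m+\varepsilon]$, Taylor-expand $F_\beta(t)=F_\beta(m)+\tfrac12 F_\beta''(m)(t-m)^2+O((t-m)^3)$, and substitute $s=N^{\alpha/2}(t-m)$. The contribution of this window to the unnormalized $\ell$-th moment is
$$\frac{e^{-N^\alpha F_\beta(m)/2}}{N^{\alpha(\ell+1)/2}}\left(\frac{1}{1-m^2}\int_{\mathbb{R}}|s|^\ell e^{-F_\beta''(m)s^2/4}\,ds+o(1)\right),$$
and division by $Z_N\sim cN^{-\alpha/2}e^{-N^\alpha F_\beta(m)/2}$ yields the desired $O(N^{-\alpha\ell/2})$. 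The complementary piece $[m/2,m-\varepsilon]\cup[m+\varepsilon,1]$ contributes exponentially in $N^\alpha$ by the mechanism of part a); the singular weight $(1-t^2)^{-1}$ near $t=1$ is harmless because $F_\beta(t)\to+\infty$ as $t\to 1^-$.

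For part c), I write $1+m^2-2mt=(1-m^2)-2m(t-m)$ and expand binomially to obtain
$$\int_{m/2}^1(1+m^2-2mt)^\ell\,d\nu_N(t)=\sum_{k=0}^\ell\binom{\ell}{k}(1-m^2)^{\ell-k}(-2m)^k\int_{m/2}^1(t-m)^k\,d\nu_N(t).$$
The $k=0$ term equals $(1-m^2)^\ell\nu_N([m/2,1])$; by symmetry $\nu_N([0,1])=\tfrac12$ and $\nu_N([0,m/2])=\tfrac12\nu_N([-m/2,m/2])\to 0$ by part a), so $\nu_N([m/2,1])\to\tfrac12$. Each $k\geq 1$ term is bounded by $C_k N^{-\alpha k/2}\to 0$ via part b), and together these give the limit $\tfrac12(1-m^2)^\ell$. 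The main technical content lies in the uniform Laplace asymptotics of part b); this is presumably packaged as Proposition \ref{prop:Laplace} earlier in the paper, so the real task is to verify its hypotheses and to keep track of the singular factor $(1-t^2)^{-1}$, which is tamed by the blow-up of $F_\beta$ at the endpoints.
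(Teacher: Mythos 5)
Your proposal is correct and rests on the same core tool as the paper, namely Laplace's method applied to the de Finetti density, with the numerator--$Z_N$ quotient structure and the split into a bulk window near $\pm m$ plus exponentially negligible tails. The one genuine variation is in part c): the paper applies the Laplace proposition once more, directly to the weight $\phi_2(t) = (1+m^2-2mt)^\ell/(1-t^2)$, and compares the resulting constant with that of $Z_N$; you instead write $1+m^2-2mt = (1-m^2) - 2m(t-m)$, expand binomially, and reduce c) to a) (which gives $\nu_N([m/2,1]) \to \tfrac12$, using evenness) and to the moment bounds from b). Your route makes the mechanism a bit more transparent --- a dominant $k=0$ term and vanishing higher-order corrections --- at the cost of an extra intermediate step, while the paper's is more uniform in that all three parts are handled by the same single invocation of the Laplace proposition with different $\phi$'s. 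Your remark that the endpoint singularity $(1-t^2)^{-1}$ is controlled by the blow-up of $F_\beta$ at $t = \pm 1$ corresponds exactly to Proposition \ref{prop:F}\,c), which is used to verify the integrability hypothesis in Proposition \ref{prop:Laplace}.
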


Lemma \ref{lemma:A5} is proved at the end of this section using Laplace's method. The following version is a special case
of \cite[Theorem 7.1]{Olver} (cf. \cite[Proposition 24]{HKW}).

\begin{proposition}[Laplace method \protect\cite{Olver}]
\label{prop:Laplace}Suppose ${F:}\left( -1,1\right) \rightarrow \mathbb{R}$
is differentiable, $\phi :\left( -1,1\right) \rightarrow \mathbb{R}$ is continuous and for some $-1 <a <b \leq 1$ we have

\begin{enumerate}
\item  $\inf_{t\in \left[
c,b\right) }F(t)>F(a)\,$ for all $c\in \left( a,b\right) $.

%\item $F^{\prime }$ and $\phi $ are continuous in a neighborhood of $a$.

\item As $t\searrow a$ we have
\begin{eqnarray}
F(t) &=&F(a)+P\,\left( t-a\right) ^{\kappa }\,+\,\mathcal{O}\big(\left( t-a\right)
^{\kappa +1}\big)  \label{eq:Olver1} \\
\phi (t) &=&Q\,\left( t-a\right) ^{\lambda -1}\,+\,\mathcal{O}\big(\left(
t-a\right) ^{\lambda }\big)  \label{eq:Olver2}
\end{eqnarray}

where $\kappa ,\lambda $ and $P$ are positive constants, $Q$ is a real
constant, and (\ref{eq:Olver1}) is differentiable.

\item For $x \in \mathbb{R}$ sufficiently large the function $t \mapsto e^{-x\,F(t)\,/2}\,\phi(t)\,$ belongs to $L^1(a,b)$.
\end{enumerate}

\noindent
Then as $x\rightarrow \infty\,$ the integral $I(x) := \int_a^b e^{-x\,F(t)\,/2}\,\phi(t)\,dt$ satisfies
\begin{equation*}
I\left( x\right) ~\approx ~\frac{Q}{\kappa }\;\;\Gamma\!\left( \frac{\lambda }{\kappa
}\right) \,\left( \frac{2}{xP}\right) ^{\frac{
\lambda }{\kappa }}\,e^{-x\,F(a)/2}
\end{equation*}
where $A(x)\approx B(x)$ means $\lim_{x\to\infty} \frac{A(x)}{B(x)}=1$ and\, $\Gamma$ denotes the Gamma function.
\end{proposition}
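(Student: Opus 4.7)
The plan is to apply Proposition \ref{prop:Laplace} to both the numerator and the normalizing constant $Z_N$ of the density of $\nu_N$, exploiting the fact that $F_\beta$ is even on $(-1,1)$ with only two global minimizers $\pm m$, both non-degenerate in the sense $F_\beta''(\pm m) > 0$, and that $1/(1-t^2)$ is smooth and strictly positive at $\pm m$ since $m \in (0,1)$. These analytic facts are recorded in Proposition \ref{prop:F} of the paper. For part (a), I bound the numerator of $\nu_N([-m/2, m/2])$ trivially: the uniqueness of $\pm m$ as global minima yields $\delta > 0$ with $F_\beta(t) \geq F_\beta(m) + \delta$ on the compact set $[-m/2, m/2]$, so
\begin{equation*}
\int_{-m/2}^{m/2} \frac{e^{-N^\alpha F_\beta(t)/2}}{1-t^2}\, dt \;\leq\; C_1\, e^{-N^\alpha (F_\beta(m)+\delta)/2}.
\end{equation*}
A matching lower bound $Z_N \geq C_2\, N^{-\alpha/2} e^{-N^\alpha F_\beta(m)/2}$ comes from restricting $Z_N$ to a right-neighborhood of $m$ and invoking Proposition \ref{prop:Laplace} with $a = m$, $\kappa = 2$, $\lambda = 1$, $Q = 1/(1-m^2)$. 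The ratio yields the claim with any $\delta' < \delta/2$ after absorbing the polynomial prefactor into the constant.

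For part (b), I split $\int_{m/2}^1 = \int_{m/2}^m + \int_m^1$. On $(m,1)$, apply Proposition \ref{prop:Laplace} directly to $\int_m^1 (t-m)^\ell \frac{e^{-N^\alpha F_\beta(t)/2}}{1-t^2}\, dt$ with $a = m$ and $\phi(t) = (t-m)^\ell/(1-t^2)$, so that $\lambda = \ell+1$, $\kappa = 2$, giving an asymptotic of order $N^{-\alpha(\ell+1)/2}\, e^{-N^\alpha F_\beta(m)/2}$. On $(m/2, m)$, substitute $s = m - t$ to reduce to an integral on $(0, m/2)$ whose exponent $\widetilde F(s) := F_\beta(m-s)$ satisfies $\widetilde F'(0) = 0$ and $\widetilde F''(0) = F_\beta''(m) > 0$, so Proposition \ref{prop:Laplace} applies again and yields the same order. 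Dividing by $Z_N \asymp N^{-\alpha/2}\, e^{-N^\alpha F_\beta(m)/2}$ produces the bound $C_\ell\, N^{-\alpha\ell/2}$.

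For part (c), expand
\begin{equation*}
(1+m^2 - 2mt)^\ell = \bigl((1-m^2) + 2m(m-t)\bigr)^\ell = \sum_{k=0}^\ell \binom{\ell}{k} (1-m^2)^{\ell-k} (2m)^k (m-t)^k.
\end{equation*}
The $k=0$ term contributes $(1-m^2)^\ell\, \nu_N((m/2, 1))$, while each $k \geq 1$ term is bounded in absolute value by a constant times $N^{-\alpha k/2}$ via part (b), hence vanishes as $N \to \infty$. By the evenness of both $F_\beta$ and $1/(1-t^2)$ the measure $\nu_N$ is symmetric about $0$, so $\nu_N((m/2, 1)) = \tfrac{1}{2}(1 - \nu_N([-m/2, m/2]))$, which tends to $1/2$ by part (a). Passing to the limit yields $\tfrac{1}{2}(1-m^2)^\ell$.

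The main obstacle is verifying the analytic properties of $F_\beta$ at its minima $\pm m$: uniqueness of the global minimizers on $(-1,1)$ and non-degeneracy $F_\beta''(\pm m) > 0$. Both follow from differentiating the explicit formula for $F_\beta$ and using the defining equation $\tanh(\beta m) = m$ together with $\beta > 1$, and should be contained in Proposition \ref{prop:F}. Once these are in hand the Laplace-method estimates in parts (a)--(c) are essentially routine, the only care being to track the polynomial prefactors uniformly in $N$ so that the claimed inequalities hold for \emph{all} $N \in \IN$.
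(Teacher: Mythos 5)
You have proved the wrong statement. The statement in question is Proposition \ref{prop:Laplace} itself, i.e.\ the Laplace-method asymptotics $I(x)\approx \frac{Q}{\kappa}\,\Gamma(\lambda/\kappa)\,(2/(xP))^{\lambda/\kappa}e^{-xF(a)/2}$ for $I(x)=\int_a^b e^{-xF(t)/2}\phi(t)\,dt$ under hypotheses 1.--3. Your proposal never addresses this: instead it establishes the three assertions of Lemma \ref{lemma:A5} about the de Finetti measures $\nu_N$, and it does so by \emph{invoking} Proposition \ref{prop:Laplace} as a black box (``apply Proposition \ref{prop:Laplace} to both the numerator and the normalizing constant $Z_N$''). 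As a proof of the proposition this is circular, and as a submission for the stated target it is simply a proof of a different result --- one which, incidentally, closely mirrors the paper's own proof of Lemma \ref{lemma:A5} (same splitting at $\pm m/2$, same parameters $\kappa=2$, $\lambda=1$ resp.\ $\lambda=\ell+1$, same binomial/evenness bookkeeping for part c)), so that part of your work is fine in substance but misplaced.

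For the proposition itself the paper offers no proof: it is quoted as a special case of Olver's Theorem 7.1 (cf.\ Proposition 24 of \cite{HKW}). If you were to prove it, the argument would have to run roughly as follows: use hypothesis 1.\ together with the integrability hypothesis 3.\ to show that the contribution of $[c,b)$ is $O\big(e^{-x(F(a)+\eta)/2}\big)$ for some $\eta>0$, hence negligible; on a small interval $(a,c)$ use the differentiable expansion \eqref{eq:Olver1} to change variables $v=F(t)-F(a)$, so that \eqref{eq:Olver2} turns the integral into $e^{-xF(a)/2}\int_0^{\varepsilon} e^{-xv/2}\,g(v)\,dv$ with $g(v)\sim \frac{Q}{\kappa P^{\lambda/\kappa}}\,v^{\lambda/\kappa-1}$ as $v\searrow 0$; then Watson's lemma (or a direct dominated-convergence argument after rescaling $v=2w/x$) yields the stated leading term with the factor $\Gamma(\lambda/\kappa)(2/(xP))^{\lambda/\kappa}$. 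None of these steps --- localization, the change of variables justified by the differentiability of \eqref{eq:Olver1}, and the Watson-lemma evaluation --- appears in your proposal, so the proof of the actual statement is missing in its entirety.
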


Next we summarize those properties of the function $F_{\beta}$ that are used in the proof of Lemma \ref{lemma:A5}.
In view of Proposition \ref{prop:Laplace} we need to analyze the monotonicity properties of $F_{\beta}$. In addition we provide
an estimate that is useful to establish integrability of $\nu_N$ near the endpoints $t=\pm 1$.

\begin{proposition}
\label{prop:F}
Assume $\beta > 1$ and let $F_{\beta}$, $m \equiv m (\beta)$ be defined as in \eqref{eq:defm}. Then:
\renewcommand{\labelenumi}{\alph{enumi})}
\begin{enumerate}
\item $F_{\beta} \in C^3(-1,1)$ is an even function, i.e. $F_{\beta}(t)=F_{\beta}(-t)\,$ for all $t \in (-1, 1)$.
\item $F'_{\beta} < 0\,$ on $(0,m)\,$, $F'_{\beta} >0\,$ on $(m, 1)\,$, and $F''_{\beta}(m) > 0\,$.
\item For all $t \in (-1, 1)$ we have $e^{-F_{\beta}(t)/2} \leq e^{9\beta/2}(1-|t|)$.
\end{enumerate}
\end{proposition}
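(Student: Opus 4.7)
Part (a) is essentially immediate: the function $F_\beta$ is built from $\text{artanh}(t)=\frac{1}{2}\ln\frac{1+t}{1-t}$ (which is odd and real-analytic on $(-1,1)$) and $\ln(1-t^2)$ (even and smooth on $(-1,1)$), so $F_\beta$ is even and in $C^\infty(-1,1)$. I would simply record these observations.

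For part (b), I would start by computing
\[
F'_\beta(t)=\frac{2}{1-t^2}\Bigl(\frac{\text{artanh}(t)}{\beta}-t\Bigr),
\]
so that on $(0,1)$ the sign of $F'_\beta$ is the sign of $G(t):=\beta t-\text{artanh}(t)$ (with a sign flip). Note $G(0)=0$, $G'(t)=\beta-(1-t^2)^{-1}$, and $G'$ changes from positive to negative at $t_*:=\sqrt{1-1/\beta}\in(0,1)$, while $G(t)\to-\infty$ as $t\to 1^-$. Hence $G$ first rises above $0$ on $(0,t_*)$ and then decreases through a unique root on $(t_*,1)$; this root must be the average magnetization $m$ defined in \eqref{eq:defm}, and so $F'_\beta<0$ on $(0,m)$ and $F'_\beta>0$ on $(m,1)$. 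As a bonus of this analysis, $m>t_*$, i.e.\ $m^2>1-1/\beta$. A direct computation gives
\[
F''_\beta(t)=\frac{2}{(1-t^2)^2}\Bigl(\frac{1+2t\,\text{artanh}(t)}{\beta}-(1+t^2)\Bigr),
\]
and plugging in $\text{artanh}(m)=\beta m$ collapses this to $F''_\beta(m)=\frac{2}{(1-m^2)^2}\bigl(\frac{1}{\beta}+m^2-1\bigr)$, which is strictly positive by $m^2>1-1/\beta$.

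For part (c), the cleanest strategy is to introduce $u:=\text{artanh}(t)\in\mathbb{R}$, so that $t=\tanh u$, $(1-t^2)^{-1/2}=\cosh u$, and $1-|t|=2/(e^{2|u|}+1)$. By evenness it suffices to treat $u\ge 0$, and the claimed inequality then becomes
\[
e^{-u^2/(2\beta)}\cosh(u)\le e^{9\beta/2}\cdot\frac{2}{e^{2u}+1}.
\]
Clearing denominators and expanding, the left side is at most $e^{-u^2/(2\beta)}\cdot\frac{1}{4}(e^{3u}+2e^u+e^{-u})\le e^{-u^2/(2\beta)+3u}$ for $u\ge 0$. The remaining step is the one-variable quadratic inequality $-\frac{u^2}{2\beta}+3u\le \frac{9\beta}{2}$, i.e.\ $(u-3\beta)^2\ge 0$, which is trivially true. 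This finishes (c).

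I do not expect a real obstacle here; the only step that requires a moment of thought is showing $F''_\beta(m)>0$, and even that reduces to the inequality $m^2>1-1/\beta$, which drops out for free from the monotonicity argument in (b) via the auxiliary function $G$. The change of variables $u=\text{artanh}(t)$ in (c) is the key device that turns a messy inequality involving logarithms into the elementary completion of a square.
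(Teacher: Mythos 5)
Your proposal is correct. Parts (a) and (b) run along the same lines as the paper: the paper also records $F'_{\beta}(t) = \frac{2}{\beta(1-t^2)}(\mathrm{Artanh}(t) -\beta t)$ and then declares the rest of (b) ``a straightforward computation''; your auxiliary function $G(t)=\beta t-\mathrm{artanh}(t)$ just spells out that computation, with the pleasant side effect that the sign of $F''_{\beta}(m)$ drops out of $m>\sqrt{1-1/\beta}$ rather than requiring a separate argument. The genuinely different step is part (c). The paper substitutes $X:=-\ln(1-t)$, writes
\begin{equation*}
\frac{F_{\beta}(t)}{2} = \frac{1}{8\beta}\bigl(X+\ln(1+t)\bigr)^2 +\frac{1}{2}\bigl(\ln(1+t) - X\bigr) \geq \frac{1}{8\beta} X^2 -\frac{1}{2}X \,,
\end{equation*}
and then finishes with the completed square $\frac{1}{8\beta}X^2-\frac{3}{2}X+\frac{9}{2}\beta=\frac{1}{8\beta}(X-6\beta)^2\geq 0$. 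Your substitution $u=\mathrm{artanh}(t)$ instead diagonalizes the quadratic term in $F_\beta$ directly, giving $e^{-F_\beta(t)/2}=e^{-u^2/(2\beta)}\cosh u$ and $1-|t|=2/(e^{2|u|}+1)$, and the final step is the completed square $(u-3\beta)^2\geq 0$. Both are two-line arguments ending in a perfect square; your version is arguably slightly more systematic (one substitution handles both the exponential prefactor and the $1-|t|$ target simultaneously), while the paper's choice $X=-\ln(1-t)$ is tailored precisely to produce the factor $1-t$ with minimal rewriting. One small wording fix: after clearing denominators the left side \emph{equals} $e^{-u^2/(2\beta)}\tfrac{1}{4}(e^{3u}+2e^u+e^{-u})$ (you wrote ``is at most''); the bound $\leq e^{-u^2/(2\beta)+3u}$ is the inequality that follows. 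This is cosmetic and does not affect correctness.
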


\begin{proof}
Statement a) is obvious. Statement b) follows from a straight forward computation for which it is useful to observe that
$\ln\frac{1+t}{1-t} =2$Artanh$(t)$ and $F'_{\beta}(t) = \frac{2}{\beta(1-t^2)}($Artanh$(t) -\beta t)$. Because of the evenness
of $F_{\beta}$ it suffices to prove statement c) for $0 \leq t < 1$ only. Set $X:= -\ln(1-t)\geq 0$. Since $\ln(1+t) \geq 0$ we have
\begin{eqnarray*}
\frac{F_{\beta}(t)}{2} = \frac{1}{8\beta}(X+\ln(1+t))^2 +\frac{1}{2}(\ln(1+t) - X) \geq \frac{1}{8\beta} X^2 -\frac{1}{2}X \,.
\end{eqnarray*}
Using in addition that $\frac{1}{8\beta} X^2 -\frac{3}{2}X + \frac{9}{2}\beta \geq 0$ the claim follows.
\end{proof}

We now have all ingredients to verify Lemma \ref{lemma:A5}.

\begin{proof}[Proof (Lemma \protect\ref{lemma:A5})]
We begin by evaluating the asymptotic behavior of the norming constant
\begin{eqnarray*}
Z_N = \int_{-1}^1 e^{-N^{\alpha} F_{\beta} (t)/2} \phi(t) \,dt \quad \mbox{with} \quad  \phi(t) =\frac{1}{1-t^2}.
\end{eqnarray*}
In order to apply Proposition \ref{prop:Laplace} we need to split the domain of integration into those four regions where
$F_{\beta}$ is monotone. Due to the evenness of $F_{\beta}$ (Proposition \ref{prop:F} a) it suffices to consider
the integrals over $[-m, 0]$ and $[m, 1)$. In both cases the parameters for condition 2. of Proposition \ref{prop:Laplace}
are $\kappa = 2$, $P= F''_{\beta}(-m)/2=F''_{\beta}(m)/2>0$, $\lambda=1$, and $Q=1/(1-m^2)$. Conditions 1. and 3. of
Proposition \ref{prop:Laplace} are satisfied because of statements b) and c) of Proposition \ref{prop:F}. Hence
\begin{equation}
\label{asymp:Z}
Z_N \approx \frac{2 \, \Gamma ({\textstyle\frac{1}{2})}}{1-m^2} \, \Big(\frac{4}{F^{''}_{\beta} (m) N^{\alpha}}\Big)^{\frac{1}{2}} e^{-N^{\alpha} F_{\beta} (m)/2} \, .
\end{equation}
Keeping the integrand fixed one may apply Proposition \ref{prop:Laplace} also to the integral over $[-\frac{m}{2}, 0]$ (now
$\kappa = 1$, $P= F'_{\beta}(-m/2)$, $\lambda=1$, $Q=1/(1-(m/2)^2)$). We arrive at
\begin{equation}
\label{asymp:a}
\int_{-\frac{m}{2}}^{\frac{m}{2}} e^{-N^{\alpha} F_{\beta} (t)/2} ~ \frac{dt}{1-t^2}
\approx \frac{2\, \Gamma (1)}{1-(\frac{m}{2})^2} \cdot \frac{2}{F'_{\beta} (-\frac{m}{2}) N^{\alpha}} \,
e^{-N^{\alpha} F_{\beta} (\frac{m}{2})/2}
\end{equation}
Taking the quotient of \eqref{asymp:a} and \eqref{asymp:Z} one easily derives statement a) with, say,
$\delta =\frac{1}{2}( F_{\beta} (\frac{m}{2}) - F_{\beta} (m)) > 0$.

In order to obtain the remaining claims
%b) resp. c)
set $\phi_1(t) := |t-m|^{\ell}/(1-t^2)$ resp.
$\phi_2(t) := (1+m^2-2mt)^{\ell}/(1-t^2)$. Applying Proposition \ref{prop:Laplace} to the intervals $[-m, -\frac{m}{2}]$ and $[m,1)$
we find
\begin{eqnarray*}
 \int_{\frac{m}{2}}^1 e^{- N^{\alpha} F_{\beta} (t)/2} \phi_1(t) \, dt &\approx &
 \frac{\Gamma ({\textstyle\frac{\ell+1}{2}})}{1-m^2}  \,
 \Big(\frac{4}{F^{''}_{\beta} (m) N^{\alpha}}\Big)^{\frac{\ell+1}{2}} e^{-N^{\alpha} F_{\beta} (m)/2}\vspace{0.3cm}\\
 \int_{\frac{m}{2}}^1 e^{- N^{\alpha} F_{\beta} (t)/2} \phi_2(t) \, dt &\approx &
 \frac{\Gamma ({\textstyle\frac{1}{2}})}{(1-m^2)^{1-\ell}} \,
 \Big(\frac{4}{F^{''}_{\beta} (m) N^{\alpha}}\Big)^{\frac{1}{2}} e^{-N^{\alpha} F_{\beta} (m)/2}
\end{eqnarray*}
and statements b) and c) follow from \eqref{asymp:Z}.
\end{proof}

\section{Proof of the Main Result}\label{sec.3}

We begin the proof with large deviations estimates that demonstrate the efficiency of the indicators
$\mathcal{X}_{N, \pm}$ defined in \eqref{eq:A3.1}. They are immediate consequences of Hoeffding's inequality.
Nevertheless, we provide a short proof for the convenience of the reader.
\begin{proposition}
\label{prop:A4}

Let $P_t$ and $S_N$ be defined as in \eqref{def:Pt}, \eqref{eq:A3.1}.
For $a \in (0,1)$ denote $q_a := - \frac{1}{4} \log (1-a^2) > 0$.
Then the following estimates hold true.
\renewcommand{\labelenumi}{\alph{enumi})}
\begin{enumerate}
\item For all $t \in [a,1]: P^{(N^2)}_t (S_N \leq 0) \leq e^{-q_a N^2}$.
\item For all $t \in [-1,-a]: P^{(N^2)}_t (S_N > 0) \leq e^{-q_a N^2}$.
\end{enumerate}
\end{proposition}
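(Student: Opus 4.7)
The plan is a standard Chernoff (exponential Markov) argument tailored to $\pm 1$-valued variables. Under $P_t^{(N^2)}$ the entries $X_N(i,j)$ with $1 \leq i \leq j \leq N$ form $n_N := N(N+1)/2$ i.i.d.\ random variables each taking value $+1$ with probability $(1+t)/2$ and $-1$ with probability $(1-t)/2$; in particular $E_t[X_N(i,j)] = t$, so for $t \in [a,1]$ the sum $S_N$ has positive drift and the event $\{S_N \leq 0\}$ should be exponentially unlikely.

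For part (a) with $t \in [a,1)$, I would apply Markov's inequality to $e^{-sS_N}$ for $s \geq 0$; independence yields
\[
P_t^{(N^2)}(S_N \leq 0) \;\leq\; \Big(\tfrac{1+t}{2}\, e^{-s} + \tfrac{1-t}{2}\, e^{s}\Big)^{n_N},
\]
and I would then optimize in $s$. The minimum is attained at $e^{2s} = (1+t)/(1-t)$ (which is $\geq 1$ because $t > 0$), and a short computation collapses the bracket to $\sqrt{1-t^2}$, producing
\[
P_t^{(N^2)}(S_N \leq 0) \;\leq\; (1-t^2)^{n_N/2}.
\]
For $t = 1$ the estimate is trivial, since then $S_N = n_N > 0$ almost surely.

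Since $1-t^2 \leq 1-a^2$ on $[a,1]$ and $2n_N = N(N+1) \geq N^2$, I would then bound
\[
(1-t^2)^{n_N/2} \;\leq\; (1-a^2)^{n_N/2} \;=\; e^{-2 n_N q_a} \;\leq\; e^{-N^2 q_a},
\]
matching the claim in (a). Part (b) I would deduce from (a) by sign symmetry of $P_t^{(1)}$: the law of $-X$ under $P_t^{(N^2)}$ coincides with the law of $X$ under $P_{-t}^{(N^2)}$, hence $P_t^{(N^2)}(S_N > 0) = P_{-t}^{(N^2)}(S_N < 0) \leq P_{-t}^{(N^2)}(S_N \leq 0)$, and applying part (a) at the value $-t \in [a,1]$ finishes the argument.

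There is no serious obstacle; the one point worth flagging is to use the sharp Chernoff computation for $\pm 1$ variables rather than a generic Hoeffding bound. Hoeffding's usual form yields only $\exp(-n_N t^2/2)$, whose exponent $n_N a^2/2$ at $t=a$ is strictly weaker than the required $N^2 q_a$ (since $-\log(1-a^2) > a^2$); the Legendre-transform optimization above is what produces exactly the constant $q_a = -\tfrac{1}{4}\log(1-a^2)$ featured in the statement.
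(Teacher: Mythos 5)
Your proof is correct and follows essentially the same route as the paper's: an exponential Markov/Chernoff bound with the optimal tilting parameter, whose moment generating factor evaluates to $\sqrt{1-t^2}$, combined with $N(N+1)/2 \geq N^2/2$. The only cosmetic difference is that you derive the optimal $s$ by minimization whereas the paper directly writes down $\lambda(t)=\frac{1}{2}\log\frac{1-t}{1+t}$ (i.e.\ $-s^*$ in your notation) and checks it works.
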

\begin{proof}
Since $P_{\pm 1} (S_N = \pm \frac{1}{2} N (N+1)) = 1$ we only need to consider $t \in (-1,1)$.
Define $\lambda (t) := \frac{1}{2} \log \frac{1-t}{1+t}$. Then
\begin{equation*}
E_t (e^{\lambda (t) X_N(1,1)}) = \frac{1}{2} e^{\lambda (t)} (1+t) + \frac{1}{2} e^{- \lambda (t)} (1-t) = \sqrt{1 - t^2} \,.
\end{equation*}
For $t \in [a,1)$ we have $\lambda (t) < 0$. Using in addition that the random variables $X_N(i,j)$, $1 \leq i \leq j \leq N$,
are independent and identically distributed with respect to the probability measure $P_t$ we obtain
\begin{equation*}
P_t (S_N \leq 0) \leq E_t (e^{\lambda (t) S_N}) = E_t (e^{ \lambda (t) X_N (1,1)})^{\frac{N (N + 1)}{2}} \leq e^{-q_a N^2} \,.
\end{equation*}
Similarly, $\lambda (t) > 0$ for $t \in (-1,-a]$ and $P_t (S_N > 0) \leq E_t (e^{\lambda (t) S_N}) \leq e^{-q_a N^2}$.
\end{proof}

We are now ready to prove the lemma which is the key for proving our main result, i.e. to show that $Y_N$
defined in \eqref{eq:A4.1} has approximately uncorrelated entries.

\begin{lemma}
\label{lemma:A7}

Let $(X_N)_N$ be a generalized $\IP_{N^2}^{N^{\alpha} F_{\beta}}$- Curie-Weiss ensemble with
$\alpha \geq 1$ and $\beta > 1$ and let $Y_N$ be defined as in \eqref{eq:A4.1} $($see also \eqref{eq:A2.2}$)$.
Then $(Y_N)_N$ has approximately uncorrelated entries with respect to the probability measure
$\IP_{N^2}^{N^{\alpha} F_{\beta}}$ (see Definition \ref{def:approx}).

\end{lemma}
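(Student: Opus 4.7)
The strategy is to evaluate moments of $Y_N$ via the de Finetti representation $\IE_{N^2}^{N^{\alpha} F_{\beta}}(\cdot)=\int_{-1}^{1} E_t(\cdot)\,d\nu_N(t)$ and to combine the concentration of $\nu_N$ near $\pm m$ from Lemma~\ref{lemma:A5} with the large deviation bounds for $S_N$ under $P_t$ from Proposition~\ref{prop:A4}. The algebraic key is that $\mathcal{X}_{N,+}+\mathcal{X}_{N,-}=1$, which gives the two representations
\[
Y_N(i,j)=Z_N^{+}(i,j)+\tfrac{2m}{\sqrt{1-m^2}}\,\mathcal{X}_{N,-}=Z_N^{-}(i,j)-\tfrac{2m}{\sqrt{1-m^2}}\,\mathcal{X}_{N,+},
\]
where $Z_N^{\pm}(i,j):=(X_N(i,j)\mp m)/\sqrt{1-m^2}$. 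The first form is tailored to $t>0$ and the second to $t<0$.

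I would split the $t$-integration into $[m/2,1]$, $[-m/2,m/2]$ and $[-1,-m/2]$. Each entry of $Y_N$ is bounded uniformly in $N$ by $(1+m)/\sqrt{1-m^2}$, so the central part contributes at most a constant times $\nu_N([-m/2,m/2])=O(e^{-\delta N^{\alpha}})$ by Lemma~\ref{lemma:A5}~a), which is negligible for both \eqref{eq:A7.1} and \eqref{eq:A7.2}. On $[m/2,1]$ I plug the first representation into the product inside $E_t$ and expand. The expansion produces one ``main term'' $\prod Z_N^{+}(\cdot)$ plus remainder terms, each of which carries at least one factor of $\mathcal{X}_{N,-}$ and is pointwise bounded by an $N$-independent constant times $\mathcal{X}_{N,-}$. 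By Proposition~\ref{prop:A4}~a), $E_t(\mathcal{X}_{N,-})\le e^{-q_{m/2}N^{2}}$ on this range, so after integration in $t$ the remainders contribute an exponentially small amount. Using the $P_t$-independence of the (distinct) entries $X_N(i,j)$, together with the disjointness assumption on the indices, the main term factors as
\[
E_t\Big(\textstyle\prod_{\nu=1}^{\ell} Z_N^{+}(i_\nu,j_\nu)\prod_{\rho=1}^{p} Z_N^{+}(u_\rho,v_\rho)\Big)=\Big(\tfrac{t-m}{\sqrt{1-m^2}}\Big)^{\!\ell} E_t\Big(\textstyle\prod_{\rho=1}^{p} Z_N^{+}(u_\rho,v_\rho)\Big),
\]
and the trailing factor is bounded by a constant depending only on $p$ and $m$. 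Integration against $\nu_N$ and Lemma~\ref{lemma:A5}~b) then yield $C_{\ell,p}\,N^{-\alpha\ell/2}\le C_{\ell,p}\,N^{-\ell/2}$ (using $\alpha\ge 1$), which gives the contribution of $[m/2,1]$ to \eqref{eq:A7.1}. For \eqref{eq:A7.2}, the same expansion applied to $Y_N(i_\nu,j_\nu)^2$ has main term $\prod E_t(Z_N^{+}(i_\nu,j_\nu)^2)=((1+m^2-2mt)/(1-m^2))^{\ell}$, and Lemma~\ref{lemma:A5}~c) shows that its contribution converges to $\tfrac12$.

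The region $[-1,-m/2]$ is handled identically with the second representation of $Y_N$, Proposition~\ref{prop:A4}~b), and the symmetry of $\nu_N$ under $t\mapsto -t$ that follows from the evenness of $F_\beta$ (Proposition~\ref{prop:F}~a)) and of $\phi(t)=1/(1-t^2)$; the substitution $t\mapsto -t$ reduces the resulting integrals to those in Lemma~\ref{lemma:A5}. Summing the three regions gives $O(N^{-\ell/2})$ for \eqref{eq:A7.1} and the limit $\tfrac12+\tfrac12=1$ for \eqref{eq:A7.2}, which furnishes the required sequence $a_{\ell,N}\to 0$. The main technical obstacle is merely bookkeeping in the expansion step: one must check that every term other than the pure $Z_N^{\pm}$-product is pointwise dominated by a constant times $\mathcal{X}_{N,\mp}$, so that the exponential decay from Proposition~\ref{prop:A4} overwhelms the only polynomial Laplace rate $N^{-\alpha\ell/2}$ supplied by Lemma~\ref{lemma:A5}.
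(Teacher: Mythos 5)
Your proposal is correct and uses the same three ingredients as the paper's proof (Laplace concentration of $\nu_N$ near $\pm m$ from Lemma~\ref{lemma:A5}, exponential suppression of the mismatched indicator from Proposition~\ref{prop:A4}, and exact evaluation of $E_t$-products via $P_t$-independence), arriving at the same estimates. The only difference is organizational: the paper first inserts the partition of unity $\mathcal{X}_{N,+}+\mathcal{X}_{N,-}=1$ inside the expectation and uses $\mathcal{X}_{N,\pm}Y_N=\mathcal{X}_{N,\pm}Y_{N,\pm}$ to reduce to $H_{N,\pm}=\prod(X_N\mp m)^{\cdot}$ before splitting the $t$-integral, whereas you split the $t$-integral first and then expand $Y_N=Z_N^{\pm}+\tfrac{2m}{\sqrt{1-m^2}}\mathcal{X}_{N,\mp}$ on the outer intervals; since $\prod Y_N^2=(1-m^2)^{-\ell}(\mathcal{X}_{N,+}H_{N,+}+\mathcal{X}_{N,-}H_{N,-})$, the two bookkeepings produce identical main and error terms, so this is not a genuinely different route.
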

\begin{proof}
%According to [Thm5, HKW] we only need to show that $\{Y (i,j)\}_{1 \leq i,j \leq N}$ is an approximately
%uncorrelated scheme of random variables.

Let us first consider \eqref{eq:A7.2} and define (see \eqref{eq:A3.1}, \eqref{eq:A4.1})
\begin{align*}
G_{N,\pm} & := \mathcal{X}_{N,\pm} \prod^{\ell}_{\nu = 1} Y_N (i_{\nu}, j_{\nu})^2
= \mathcal{X}_{N,\pm} \prod^{\ell}_{\nu = 1} (\mathcal{X}_{N,\pm} Y_N (i_{\nu}, j_{\nu}))^2\\
&  = \mathcal{X}_{N,\pm} \prod^{\ell}_{\nu = 1} Y_{N,\pm} (i_{\nu}, j_{\nu})^2
 = (1-m^2)^{-\ell} \mathcal{X}_{N,\pm} \prod^{\ell}_{\nu = 1} (X_N (i_{\nu}, j_{\nu}) \mp m)^2
\end{align*}
Obviously,
\begin{equation*}
\IE_{N^2}^{N^{\alpha} F_{\beta}} \Big(\prod_{\nu = 1}^{\ell} Y_N (i_{\nu}, j_{\nu})^2\Big) =
\IE_{N^2}^{N^{\alpha} F_{\beta}} (G_{N,+}) + \IE_{N^2}^{N^{\alpha} F_{\beta}} (G_{N,-}) \quad \mbox{and}
\end{equation*}
\begin{equation}
\label{eq:A8.1}
\IE_{N^2}^{N^{\alpha} F_{\beta}} (G_{N,\pm}) = (1-m^2)^{-\ell} \int_{-1}^1 E_t (\mathcal{X}_{N,\pm} H_{N,\pm}) ~d \nu_N (t) \,,
\end{equation}
with $H_{N,\pm} := \prod^{\ell}_{\nu = 1} (X_N (i_{\nu}, j_{\nu}) \mp m)^2$.

Note that $X_N^2 (i_{\nu}, j_{\nu}) = 1$ and that $E_t (X_N(i_{\nu}, j_{\nu})) = t$.
Since the index pairs $(i_1,j_1), \ldots , (i_{\ell},j_{\ell})$ are assumed to be pairwise disjoint we obtain
\begin{equation}
\label{eq:A8.2}
E_t (H_{N,\pm}) = (1 + m^2 \mp 2mt)^{\ell} \,.
\end{equation}
In order to evaluate $\IE_{N^{\alpha}}^{N^2 F_{\beta}} (G_{N, +})$ we decompose
\begin{eqnarray}
\nonumber
\int_{-1}^1\!\! E_t (\mathcal{X}_{N,+} H_{N,+}) \,  d \nu_N \!\!&=&\! \!\!
\int_{-1}^{-\frac{m}{2}} E_t (\mathcal{X}_{N,+} H_{N,+}) \, d \nu_N  \\ \label{eq:A9.1}
&+&\!\!\! \int_{-\frac{m}{2}}^{\frac{m}{2}} E_t (\mathcal{X}_{N,+} H_{N,+}) \, d \nu_N \\ \nonumber
&-&\!\!\! \int_{\frac{m}{2}}^1\!  E_t (\mathcal{X}_{N,-} H_{N,+}) \, d \nu_N  + \int_{\frac{m}{2}}^1\! E_t (H_{N,+})  d \nu_N .
\end{eqnarray}
From Proposition \ref{prop:A4}, from Lemma \ref{lemma:A5} a), and from the trivial estimate $|H_{N,+}| \leq (1+m)^{2\ell}$
we obtain constants $q, C, \delta > 0$ such that for all $N \in \IN$:
\begin{eqnarray*}
\Big{|}\int_{-1}^{- \frac{m}{2}} E_t (\mathcal{X}_{N,+} H_{N,+}) \,  d \nu_N \Big{|}
%\leq (1+m)^{2l} \int_{-1}^{- \frac{m}{2}} P_t (S_N > 0) ~  d \nu_N (t)
& \leq & (1+m)^{2\ell} e^{-q N^2} \,, \\
\Big{|}\int_{\frac{m}{2}}^{1} E_t (\mathcal{X}_{N,-} H_{N,+}) \,  d \nu_N \Big{|}
%\leq (1+m)^{2l} \int^{1}_{\frac{m}{2}} P_t (S_N \leq 0) ~  d \nu_N (t)
& \leq & (1+m)^{2\ell} e^{-q N^2} \,, \\
\Big{|}\int_{-\frac{m}{2}}^{\frac{m}{2}} E_t (\mathcal{X}_{N,+} H_{N,+}) \,  d \nu_N \Big{|}
%\leq (1+m)^{2l} \nu_N ([-\frac{m}{2}, \frac{m}{2}])
& \leq & C (1+m)^{2\ell} e^{- \delta N^{\alpha}} .
\end{eqnarray*}
Thus the first three summands on the right hand side of \eqref{eq:A9.1} all converge to zero as $N$ tends to $\infty$.
By \eqref{eq:A8.1}, \eqref{eq:A8.2} and Lemma \ref{lemma:A5} c) we have proved
\begin{equation*}
\underset{N \rightarrow \infty}{\lim} \IE_{N^2}^{N^{\alpha} F_{\beta}} (G_{N,+}) = \frac{1}{2} \,.
\end{equation*}
By analogue arguments the expected value of $G_{N,-}$ can be seen to converge to $\frac{1}{2}$ as well
and \eqref{eq:A7.2} is established.
Estimate \eqref{eq:A7.1} can be proved in a similar fashion.
To this end redefine
\begin{equation*}
H_{N,\pm} := \prod^{\ell}_{\nu = 1} (X_N(i_{\nu}, j_{\nu}) \mp m) \prod^p_{\rho = 1} (X_N(u_{\rho}, v_{\rho}) \mp m) \,.
\end{equation*}
The expected value in \eqref{eq:A7.1} equals
\begin{equation}
(1-m^2)^{-\frac{\ell+p}{2}} \Big(\int_{-1}^1 E_t (\mathcal{X}_{N,+} H_{N,+}) \, d \nu_N  +
\int_{-1}^1 E_t (\mathcal{X}_{N,-} H_{N,-}) \,  d \nu_N \Big) .
\end{equation}
Oberve that
\begin{equation*}
E_t (H_{N,\pm}) = (t \mp m)^{\ell} \cdot Q_{\pm} (t)
\end{equation*}
with $Q_{\pm}$ being a polynomial of degree $\leq p$ and $|Q_{\pm} (t)| \leq (1+m)^p$ for all $t \in [-1,1]$.
Using \eqref{eq:A9.1}, the arguments thereafter, $|H_{N,+}| \leq (1+m)^{\ell+p}$, and statement b) of Lemma \ref{lemma:A5}
one may establish the existence of constants $C, \delta, q > 0$ such that for all $N \in \IN:$
\begin{equation*}
\Big{|}\int_{-1}^{1} E_t (\mathcal{X}_{N,+} H_{N,+}) \,  d \nu_N \Big{|} \leq
(1+m)^{\ell+p} (2 e^{-q N^2} + C e^{- \delta N^{\alpha}} + C_{\ell} N^{- \frac{\alpha \ell}{2}}) .
\end{equation*}
Clearly, the same estimate holds for $|\int_{-1}^1 E_t (\mathcal{X}_{N,-} H_{N,-}) \, d \nu_N|$
and \eqref{eq:A7.1} is proved since we have assumed $\alpha \geq 1$.

%In summary we have shown that $\{Y (i,j) \}_{1 \leq i,j \leq N}$ is an approximately uncorrelated scheme of
%random variables which suffices to prove the lemma by [Thm5, HKW].
\end{proof}

The final observation we need is that rank $1$ perturbations preserve the weak convergence (in probability) of the
eigenvalue distribution measures (cf.\;\cite{FuerediK}).

\begin{lemma}
\label{lemma:perturb}
Let $(\Omega, \mathcal{F}, \IP)$ denote the probability space underlying two ensembles $(A_N)_N$, $(B_N)_N$
of real symmetric $N \times N$ matrices that satisfy rank$(A_N(\omega)-B_N(\omega)) \leq 1$
for all $\omega \in \Omega$.
Denote by $\sigma_N$ resp. by $\mu_N$ the eigenvalue distribution measures of $A_N$ resp. $B_N$. Assume furthermore that
$(\mu_N)_N$ converges weakly in probability to the semicircle law $\sigma_{sc}$.
Then  $(\sigma_N)_N$ also converges weakly in probability to the semicircle law.
\end{lemma}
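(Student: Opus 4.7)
The plan is to reduce the lemma to two classical ingredients: a deterministic interlacing bound that compares the empirical CDFs of $A_N$ and $B_N$ pointwise, and the fact that the L\'evy metric metrizes weak convergence of probability measures on $\mathbb{R}$. The point is that the rank-$1$ hypothesis is pathwise, so the randomness enters only through the assumed convergence of $(\mu_N)_N$.

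First I would invoke the Cauchy (Weyl) interlacing theorem for rank-$1$ perturbations of real symmetric matrices. For any $\omega \in \Omega$, let $\lambda_1 \leq \cdots \leq \lambda_N$ denote the ordered eigenvalues of $A_N(\omega)$ and $\tilde\lambda_1 \leq \cdots \leq \tilde\lambda_N$ those of $B_N(\omega)$. Since $\mathrm{rank}(A_N-B_N) \leq 1$, the interlacing inequalities $\tilde\lambda_k \leq \lambda_{k+1}$ and $\lambda_k \leq \tilde\lambda_{k+1}$ hold for $k = 1, \ldots, N-1$. Translating these eigenvalue interlacings into CDFs $F_{\sigma_N}(x) := \sigma_N((-\infty,x])$ and $F_{\mu_N}(x) := \mu_N((-\infty,x])$ gives the deterministic pointwise bound
\begin{equation*}
\sup_{x \in \mathbb{R}} \bigl| F_{\sigma_N}(x) - F_{\mu_N}(x) \bigr| \leq \frac{1}{N}
\end{equation*}
for every $\omega \in \Omega$.

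Next I would use that the uniform distance of CDFs dominates the L\'evy distance, so $d_L(\sigma_N,\mu_N) \leq 1/N$ pathwise. Since $d_L$ metrizes the topology of weak convergence on probability measures on $\mathbb{R}$, the hypothesis that $(\mu_N)_N$ converges weakly in probability to $\sigma_{sc}$ is equivalent to $d_L(\mu_N,\sigma_{sc}) \to 0$ in probability. The triangle inequality
\begin{equation*}
d_L(\sigma_N,\sigma_{sc}) \leq d_L(\sigma_N,\mu_N) + d_L(\mu_N,\sigma_{sc}) \leq \frac{1}{N} + d_L(\mu_N,\sigma_{sc})
\end{equation*}
then shows that $d_L(\sigma_N,\sigma_{sc}) \to 0$ in probability. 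Translating back once more yields weak convergence of $(\sigma_N)_N$ to $\sigma_{sc}$ in probability in the exact sense stated in the Introduction.

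There is no real obstacle here; the argument is essentially soft once the interlacing bound on CDFs is in place. The only point that requires a short sanity check is the equivalence between the ``bounded continuous test function'' formulation of weak convergence in probability used in the paper and convergence of $d_L$ in probability, which follows from the fact that $d_L$ metrizes weak convergence together with standard subsequence arguments for convergence in probability. If one prefers to avoid the L\'evy metric altogether, the same conclusion can be obtained more directly by approximating an arbitrary $f \in C_b(\mathbb{R})$ on a large interval by a piecewise constant function and using $\|F_{\sigma_N}-F_{\mu_N}\|_\infty \leq 1/N$ to bound the resulting telescoping sum, which is the route taken in \cite{FuerediK}.
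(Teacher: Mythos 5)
Your argument is correct, and it takes a genuinely different route from the paper. The paper fixes a bounded continuous test function $f$, approximates it by a step function $g$ with steps on subintervals of $[-4,4]$, writes $\int f\,d\sigma_N - \int f\,d\sigma_{sc}$ as a sum of six explicit differences, and bounds each term separately using the interlacing count for intervals (which gives a discrepancy of at most $2$ per interval, hence $2/N$ per step of $g$). You instead pass immediately to the empirical distribution functions, observe that interlacing gives $\|F_{\sigma_N}-F_{\mu_N}\|_\infty \le 1/N$ (slightly sharper than the paper's interval bound, since half-lines are nested), and then invoke the L\'evy metric together with a triangle inequality. Your version is cleaner and more modular, and it makes the deterministic pathwise content of the rank-$1$ hypothesis very transparent; the price is that it offloads nontrivial work onto the assertion that ``weak convergence in probability'' in the paper's $C_b$-test-function sense is \emph{equivalent} to $d_L(\mu_N,\sigma_{sc})\to 0$ in probability. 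The direction you actually need — from the paper's hypothesis to $d_L(\mu_N,\sigma_{sc})\to 0$ in probability — requires converting bounds on $\int f\,d\mu_N$ into bounds on $F_{\mu_N}$ at finitely many continuity points of $F_{sc}$, e.g.\ by sandwiching indicators of half-lines between continuous test functions; you gesture at this but it deserves a sentence or two, since it is essentially the same $\varepsilon$-bookkeeping that the paper's step-function decomposition carries out explicitly. Once that equivalence is in place, your reduction is airtight. You also correctly note at the end that the step-function route is exactly what the paper (following F\"uredi--Koml\'os) does, so you are aware of both arguments.
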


\begin{proof}
The assumption on the rank of $A_N-B_N$ implies that the eigenvalues of the two matrices interlace.
Therefore we have for any interval $I \subset \mathbb{R}$ and any $\omega \in \Omega$ that
\begin{equation}
\label{eq:A11.1}
|\#(\text{eigenvalues of } A_N(\omega) \text{ in } I) - \#(\text{eigenvalues of } B_N(\omega) \text{  in } I)| \leq 2 \,.
\end{equation}
In order to prove Lemma \ref{lemma:perturb} we fix a bounded continuous function $f: \mathbb{R} \rightarrow \mathbb{R}$
and numbers $\varepsilon, \gamma > 0$. We have to show that there exists $N_0$ such that for all $N \geq N_0$ the estimate
\begin{equation}
\label{eq:A12.1}
\IP \bigg{(} \Big{|} \int f ~ d \sigma_N - \int f ~  d \sigma_{sc} \Big{|} > \varepsilon \bigg{)} < \gamma
\end{equation}
holds. Choose a step function $g = \sum_{i = 1}^r \alpha_i \mathcal{X}_{I_i}$ $ (\alpha_i \in \IR$,
the intervals $I_i \subset [-4,4]$ are pairwise disjoint) such that
\begin{equation}
\label{eq:A12.2}
\sup \{|f (x) - g (x)| : x \in [-4,4] \} \leq \min \big(\frac{\varepsilon}{6},1\big) \,.
\end{equation}			
Write $\int f ~  d \sigma_N - \int f ~ d \sigma_{sc} = \sum_{j=1}^6 \triangle_j$, with
\begin{align*}
\triangle_1 & := \int f ~  d \mu_N - \int f ~ d \sigma_{sc}\,, ~~~~~~~~~  \triangle_2 := \int_{-4}^4 (f-g) ~ d \sigma_N\,, \\
\triangle_3  &:= \int_{-4}^4 g ~  d \sigma_N - \int_{-4}^4 g ~ d \mu_{N}\,, ~~~~~
\triangle_4 := \int_{-4}^4 (g-f) ~  d \mu_N \,,  \\
\triangle_5 &:= \int_{\IR \setminus [-4,4]} f ~ d \sigma_N\,, ~~~~~~~~~~~~~~~~~~~
\triangle_6 := -\int_{\IR \setminus [-4,4]} f ~  d \mu_N \,.
\end{align*}
It suffices to show for each $j = 1, \ldots , 6$ that there exists $N_j$ such that for all $N \geq N_j$ we have
\begin{equation}
\label{eq:A12.3}
\IP (|\triangle_j| > \frac{\varepsilon}{6}) < \frac{\gamma}{6}
\end{equation}
Indeed, \eqref{eq:A12.1} then holds for all $N \geq N_0 := \max \{N_j : j = 1, \ldots , 6 \}$.

For $j = 2,4$ estimate \eqref{eq:A12.3} follows trivially from \eqref{eq:A12.2} with $N_j = 1$.
The case $j = 1$ is a consequence of the assumption of Lemma \ref{lemma:perturb}.
Estimate \eqref{eq:A11.1} together with \eqref{eq:A12.2} imply
\begin{equation*}
|\triangle_3| \leq \sum_{i=1}^r |\alpha_i| \cdot |\sigma_N (I_i) - \mu_N (I_i)|
\leq r (||f||_{\infty} + 1) \cdot \frac{2}{N} < \frac{\varepsilon}{6}
\end{equation*}
by choosing $N_3$ sufficiently large. Next we treat $j = 6$.
Using the assumed weak convergence of $(\mu_N)_N$ again and choosing an arbitrary continuous test function
$f_0 : \IR \rightarrow [0,1]$ with ${f_0}\vert_{[-2,2]} = 0$ and ${f_0}\vert_{\IR \setminus [-4,4]} = 1$
one may obtain $N_6$ such that for all $N \geq N_6:$
\begin{equation}
\label{eq:A13.1}
\IP \Big( \mu_N (\IR \setminus [-4,4]) > {\textstyle \frac{\varepsilon}{12 (||f||_{\infty} + 1)}}\Big) \leq
\IP \Big(\int f_0 \, d\mu_N  > {\textstyle \frac{\varepsilon}{12 (||f||_{\infty} + 1)}}\Big) < \frac{\gamma}{6} .
\end{equation}
On the one hand this implies $\IP (|\triangle_6| > \frac{\varepsilon}{12}) < \frac{\gamma}{6}$ for all  $N \geq N_6$.
On the other hand we may use \eqref{eq:A13.1} to handle $\triangle_5$. It follows from \eqref{eq:A11.1} that
\begin{equation*}
|\mu_N (\IR \setminus [-4,4]) - \sigma_N ( \IR \setminus [-4,4])|
\leq \frac{2}{N} < \frac{\varepsilon}{12 ( ||f||_{\infty} + 1)}
\end{equation*}
for all $N \geq \tilde{N}_5$ with a suitable choice of $\tilde{N}_5$.
Setting $N_5 := \max (\tilde{N}_5, N_6)$ then completes the proof with the help of \eqref{eq:A13.1}.
\end{proof}

After all ingredients have been gathered we may now conclude the proof of our main result.

\begin{proof}[Proof (Theorem \protect\ref{theo:A1})]
We denote by $\sigma_N$ resp. $\mu_N$ the eigenvalue distribution measures of $A_N :=  X_N/\sqrt{N(1-m^2)}$,
resp. $B_N := Y_N/\sqrt{N}$ (see \eqref{eq:A4.1}, \eqref{eq:A2.2} for the definition of $Y_N$).
Since $(Y_N)_N$ is an approximately uncorrelated scheme of random variables it follows from \cite[Theorem 5]{HKW}
that $(\mu_N)_N$ converges weakly in probability to the semicircle law $\sigma_{sc}$.
Since
\begin{equation*}
A_N - B_N = \frac{m}{\sqrt{N(1-m^2)}} (\mathcal{X}_+ - \mathcal{X}_-) \mathcal{E}_N
\end{equation*}
always has rank 1, the claim now follows from Lemma \ref{lemma:perturb}.
\end{proof}

\begin{remark}\label{remark:generalF}
Observe that in the proof of Theorem \ref{theo:A1} we did not use the special form of the function $F_{\beta}$
but only those properties that were collected in Proposition \ref{prop:F}. Hence the result also holds for
generalized $\IP_{N^2}^{N^{\alpha} F}$- Curie-Weiss ensembles (cf. \cite[Definition 29]{HKW}), if $F : (-1,1) \to
\mathbb{R}$ satisfies all the properties stated in Proposition \ref{prop:F}. Of course, condition c) may by replaced by the
requirement that $\int_{-1}^1e^{-x F (t)/2}\frac{dt}{1-t^2} < \infty$ for sufficiently large values of $x$,
because this is the only use of property c).
\end{remark}

\texttt{\bigskip }

\bigskip\bigskip

\noindent
\begin{tabular}{lcl}
\textbf{Werner Kirsch}&\quad& \texttt{werner.kirsch@fernuni-hagen.de}\\
\textbf{Thomas Kriecherbauer}&\quad& \texttt{thomas.kriecherbauer@uni-bayreuth.de}
\end{tabular}
\end{document}